\newtheorem{thm}{Theorem} 
\newtheorem{prop}[thm]{Proposition}
\newtheorem{lma}[thm]{Lemma}
\newtheorem{cor}[thm]{Corollary}
\newtheorem{ex}{Example}
\newcommand{\CC}{\mathbb{C}}
\newcommand{\KK}{\mathcal{K}}
\newcommand{\U}{\mathcal{U}}
\renewcommand{\u}{\mathfrak{u}}
\newcommand{\I}{\mathbbm{I}}
\renewcommand{\1}{{\mathbf 1}}
\newcommand{\0}{{\mathbf 0}}
\newcommand{\tr}{\operatorname{Tr}}
\newcommand{\Dinv}{\mathcal{D}_\text{inv}}
\newcommand{\Sinv}{\mathcal{S}_\text{inv}}
\newcommand{\diag}{\operatorname{diag}}
\newcommand{\Ker}{\operatorname{Ker}}
\newcommand{\T}{\operatorname{T}\!}
\renewcommand{\H}{\operatorname{H}\!}
\newcommand{\V}{\operatorname{V}\!}
\newcommand{\HH}{\mathcal{H}}
\newcommand{\D}{\mathcal{D}}
\newcommand{\A}{\mathcal{A}}
\newcommand{\half}{\frac 12}
\newcommand{\dt}{\operatorname{d}\!t}
\newcommand{\eps}{\varepsilon}
\newcommand{\ket}[1]{|{#1}\rangle}
\newcommand{\ketbra}[2]{|#1\rangle\langle #2|}
\newcommand{\dd}[1]{\frac{\operatorname{d}}{\operatorname{d}\!#1}}
\newcommand{\dist}{\operatorname{dist}}
\title{Dynamic Distance Measures on Spaces of Isospectral Mixed Quantum States}
\author{Ole Andersson \and  Hoshang Heydari}
\address{Department of Physics, Stockholm University, 10691 Stockholm, Sweden}
\keywords{distance measure; mixed state; quantum dynamics; quantum information}
\begin{document}
\begin{abstract}
Distance measures are indispensable tools in quantum information processing and quantum computing. This since they can be used to quantify to what extent information is preserved, or altered, by quantum processes. In this paper we propose a new distance measure for mixed quantum states, that we call the dynamic distance measure, and show that it is a proper distance measure. The dynamic distance measure is defined in terms of a measurable quantity, which make it very suitable for applications. In a final section we compare the dynamical distance measure with the well-known Bures distance.
\end{abstract}

\maketitle
\section{Introduction}

Quantum information has the reputation of being a futuristic field full of far-reaching promises.
The field has attracted researchers from many different branches of science and engineering whose 
efforts have greatly improved our understanding of the physical nature of information, and hopefully will provide us with new cutting-edge technological innovations in the future. Quantum information theory has been applied to such diverse areas as bio science, nano-technology, economics, and game theory \cite{Ohya_etal2011, Sharif_etal2012, Landsburg2011}.

In recent years, new experimental results has shed light on some murky and hidden parts of quantum information, and has also opened up new opportunities beyond our expectations. Furthermore, new theoretical tools, mostly from geometry and topology, has been successfully applied to the field. For example, geometrical considerations led to the important characterization of entanglement, and the development of efficient, error-prone systems for quantum computers.

Distance measures are some of the most basic geometrical tools used in quantum information theory. Such measures are, for example, used to compare the input and output of quantum channels and gates, and hence to quantify to what extent information is preserved, or altered, by quantum processes. Examples of well-known distance measures are the trace-distance, fidelity, and Bures distance \cite{Nielsen_etal2010,Uhlmann1986, Uhlmann1991}. In this paper, we propose a new distance measure that we call the dynamic distance measure. This distance measure is defined for all pairs of isospectral, i.e., unitarily equivalent, mixed states.
We show that the dynamic distance is a proper distance measure -- a verification that, despite the naturalness of the definition, requires a surprisingly extensive geometric machinery.

Here is the outline of the paper. In Section \ref{IMSDDM} we define the dynamic distance measure, and state the main result. In Section \ref{SPIMS} we introduce a geometric framework for mixed quantum states. In Section \ref{PDDM} we, in detail, discuss properties of the dynamic distance measure and prove the main result. Finally, in Section \ref{Uhlmann and Bures} we compare the dynamic distance measure and the Bures distance \cite{Bures1969,Uhlmann1992}.

\section{Isospectral mixed states and dynamic distance measures}\label{IMSDDM}

Mixed quantum states can be represented by density operators, i.e., self-adjoint, nonnegative, trace-class operators with unit trace.
We denote the space of density operators for a quantum system modeled on a Hilbert space $\HH$ by $\D(\HH)$, and its subspace of density operators with finite rank at most $k$ by $\D_k(\HH)$.

A density operator that evolves according to a von Neumann equation remains in a single orbit of the left conjugation action of the unitary group of $\HH$ on $\D(\HH)$. The orbits are in one-to-one correspondence with the possible spectra for density operators on $\HH$. By a spectrum of a density operators of rank $k$ we mean its non-increasing sequence $\sigma=(p_1,p_2,\dots,p_k)$ of positive eigenvalues, repeated in accordance with their multiplicity. We henceforth assume $\sigma$ to be fixed, and write $\D(\sigma)$ for the corresponding orbit.

Suppose $\rho_0$ and $\rho_1$ are two density operators in $\D(\sigma)$. 
Let $H$ be a Hamiltonian operator on $\HH$, 
and assume that a curve $\rho$ satisfies the boundary value von Neumann equation:
\begin{equation}\label{von}
i\dot\rho=[H,\rho],\qquad \rho(t_0)=\rho_0,\quad\rho(t_1)=\rho_1.
\end{equation}
We then define the \emph{$H$-distance} from $\rho_0$ and $\rho_1$ to be the path integral of the uncertainty of $H$ along $\rho$,
\begin{equation*}\label{Hdistance}
D_H(\rho_0,\rho_1)=\int_{t_0}^{t_1}\sqrt{\tr(H^2\rho)-\tr(H\rho)^2}\dt.
\end{equation*}
This distance measure has an interesting physical interpretation -- it can be regarded the energy input needed to run $\rho_0$ into $\rho_1$ using $H$.
We also define the \emph{dynamic distance} between $\rho_0$ and $\rho_1$ to be
\begin{equation*}
D(\rho_0,\rho_1)=\inf_H D_H(\rho_0,\rho_1),
\end{equation*}
where the infimum is taken over all Hamiltonians $H$ for which the boundary value problem \eqref{von}
has a solution. The dynamic distance measure is defined for each pair of isospectral density operators because any two such can be connected by a solution to some von Neumann equation.
The main result of this paper is that the dynamic distance measure  is a proper distance measure on $\D(\sigma)$.

\begin{thm}\label{main result}
The dynamic distance measure is a proper distance measure.
\end{thm}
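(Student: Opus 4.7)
The plan is to realize $D(\rho_0,\rho_1)$ as the Riemannian distance for a natural metric on $\D(\sigma)$, and to deduce the four distance-measure axioms from Riemannian geometry. Using the homogeneous-space picture that will presumably be developed in Section \ref{SPIMS}, $\D(\sigma)$ is $\U(\HH)/K$, where $K$ is the unitary stabilizer of a reference state under conjugation; a curve $\rho$ with $i\dot\rho=[H,\rho]$ is the conjugation orbit of $e^{-i(t-t_0)H}$. Both $[H,\rho]$ and the variance $\tr(H^2\rho)-\tr(H\rho)^2$ are invariant under replacing $H$ by $H+H_0$ for any self-adjoint $H_0$ commuting with $\rho$, and taking the infimum of the variance over this gauge freedom defines a Riemannian metric on $\D(\sigma)$. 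With respect to this metric, $D_H(\rho_0,\rho_1)$ is the length of the trajectory generated by $H$ (with equality for the horizontal representative of $H$).

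I would then verify the distance axioms in turn. Non-negativity is immediate. Symmetry follows by time-reversal: if $\rho(t)$ connects $\rho_0$ to $\rho_1$ under $H$, then $t\mapsto\rho(t_0+t_1-t)$ connects $\rho_1$ to $\rho_0$ under $-H$, and a change of variables yields $D_{-H}(\rho_1,\rho_0)=D_H(\rho_0,\rho_1)$. The remaining two axioms, the triangle inequality and definiteness, would both be reduced to the central claim that $D(\rho_0,\rho_1)$ equals the Riemannian path distance $d_R(\rho_0,\rho_1)$: concatenation of smooth paths gives the triangle inequality for $d_R$, and positive definiteness of the pointwise metric (the variance of $H$ in $\rho$ vanishes only when $H$ acts as a scalar on the support of $\rho$, which forces $[H,\rho]=0$ and hence $\dot\rho=0$) gives definiteness.

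The main obstacle is therefore the identification $D=d_R$. The inequality $D\ge d_R$ follows once $D_H$ is recognized as the Riemannian length of a specific smooth curve. The reverse inequality $D\le d_R$ is the subtle one: it requires that every smooth path on $\D(\sigma)$ be approximable in length by a trajectory generated by a single fixed Hamiltonian. Equivalently, the length-minimizing geodesics between $\rho_0$ and $\rho_1$ must themselves be orbits of one-parameter subgroups of $\U(\HH)$. I expect to verify this by exhibiting $\D(\sigma)$ as a naturally reductive homogeneous space, so that Riemannian geodesics coincide with orbits of the group action, combined with a Hopf--Rinow argument on the compact manifold $\D(\sigma)$ to guarantee the infimum is attained. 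Once this identification is in hand, Theorem \ref{main result} follows immediately from the fact that $d_R$ is a proper distance on the Riemannian manifold $\D(\sigma)$.
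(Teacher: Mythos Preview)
Your strategy of realizing $D$ as a Riemannian distance on $\D(\sigma)$ is essentially what the paper does, though the paper works one level up: it identifies $D(\rho_0,\rho_1)$ with the distance between the fibres $\pi^{-1}(\rho_0)$ and $\pi^{-1}(\rho_1)$ in the purification space $\mathcal{S}(\sigma)$ (Corollary~\ref{distance}), and then verifies the four axioms by direct manipulation of horizontal curves there. Since $\pi:\mathcal{S}(\sigma)\to\D(\sigma)$ is a Riemannian submersion, fibre distance upstairs equals geodesic distance downstairs, so the two viewpoints are equivalent in principle.

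Two points in your outline need repair. First, the variance $\tr(H^2\rho)-\tr(H\rho)^2$ is \emph{not} invariant under $H\mapsto H+H_0$ with $[H_0,\rho]=0$: already for $\rho=\diag(p_1,p_2)$ and $H_0=\diag(1,0)$ one gets variance $p_1p_2\neq 0$. What is true (and what the paper proves as Proposition~\ref{crucial} via Lemma~\ref{Jensen}) is that the variance dominates the squared norm of the horizontal lift of $\dot\rho$, with equality precisely when the Schr\"odinger lift $i\dot\Psi=H\Psi$ is itself horizontal; so your metric exists, but as a genuine infimum, not by gauge invariance. Second, and more seriously, your route to $D\le d_R$ rests on $\D(\sigma)$ being naturally reductive \emph{for this metric}, so that length-minimizing geodesics are one-parameter group orbits. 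But the variance metric is not the normal metric on $\U(\HH)/K$ induced by the bi-invariant form $\tr(H^2)$; it is weighted by $\rho$, and the naturally-reductive condition for it is neither standard nor checked here. The paper sidesteps this entirely: because $\U(\HH)$ acts transitively on $\mathcal{S}(\sigma)$, every curve there solves some (generally time-dependent) Schr\"odinger equation, and for a horizontal curve Proposition~\ref{crucial} gives $D_H$ equal to its length. Hence $D\le d_R$ falls out immediately, with no structural claim about geodesics required.
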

\noindent Recall that a distance function must satisfy the following conditions:\vspace{3pt}

\begin{tabular}[tb]{rl}
Positivity: & $\dist(\rho_0,\rho_1)\geq 0$.\\
Non-degeneracy: & $\dist(\rho_0,\rho_1)=0\iff \rho_0=\rho_1$.\\
Symmetry: & $\dist(\rho_0,\rho_1)=\dist(\rho_1,\rho_0)$.\\
Triangle inequality: & $\dist(\rho_0,\rho_2)\leq \dist(\rho_0,\rho_1)+\dist(\rho_1,\rho_2)$.
\end{tabular}\vspace{3pt}

\noindent One can show that the dynamic distance measure  also satisfies the following unitary invariance:
\begin{equation*}
D(U\rho_0 U^\dagger,U\rho_1U^\dagger)=D(\rho_0,\rho_1).
\end{equation*}
The proof of Theorem \ref{main result} will be based on a fairly involved mathematical setup.

\section{Standard purification of isospectral mixed states}\label{SPIMS}

A state is called pure if its density operator has rank $1$.
In quantum mechanics, especially quantum information theory, \emph{purification} refers to the fact that every density operator can be thought of as representing the reduced state of a pure state.
More precisely, if $\rho$ is a density operator acting on $\HH$, and $\KK$ is a Hilbert space of large enough dimension, then there is a normalized ket $\ket{\Psi}$ in $\HH\otimes \KK$ such that $\rho$ is the partial trace of $\ketbra{\Psi}{\Psi}$ with respect to $\KK$.
By the \emph{standard purification} of density operators on $\HH$ of rank at most $k$ we will mean the surjective map
$\pi:\mathcal{S}(\HH\otimes\CC^{k*})\to\D_k(\HH)$
defined by $\pi\ket{\Psi}=\tr_{\CC^{k*}}\ketbra{\Psi}{\Psi}$.
Here, $\CC^{k*}$ is the space of linear functionals on $\CC^k$ and $\mathcal{S}(\HH\otimes \CC^{k*})$ is the unit sphere in $\HH\otimes \CC^{k*}$.
If we canonically identify $\HH\otimes \CC^{k*}$ with the space $\mathcal{L}(\CC^k,\HH)$ of linear maps from $\CC^k$ to $\HH$, equipped with the Hilbert-Schmidt inner product, then
\begin{equation}\label{standard purification}
\pi(\Psi)=\Psi\Psi^\dagger.
\end{equation}

Write $P(\sigma)$ for the diagonal $k\times k$ matrix that has $\sigma$ as its diagonal, and let $\mathcal{S}(\sigma)$ be the space of those $\Psi$ in $\mathcal{L}(\CC^k,\HH)$ that satisfies $\Psi^\dagger \Psi=P(\sigma)$, when
$\Psi^\dagger\Psi$ is expressed as a matrix relative to the standard basis in $\CC^k$. Then $\mathcal{S}(\sigma)$ is a subspace of 
the unit sphere in $\mathcal{L}(\CC^k,\HH)$, and the standard purification \eqref{standard purification} restricted to $S(\sigma)$  
is a principal fiber bundle over $\D(\sigma)$ with right acting gauge group $\U(\sigma)$,
\begin{equation}\label{action}
U\cdot\psi=\psi U,\qquad U\in\U(\sigma),\quad \psi\in \mathcal{S}(\sigma),
\end{equation}
consisting of those unitaries in $\U(k)$ that commutes with $P(\sigma)$.
The following two special cases are well known.
\begin{ex}
If $\sigma=(1;1)$, then $\D(\sigma)$ is the complex projective $n$-space,
$\mathcal{S}(\sigma)$ is the $(2n+1)$-dimensional unit sphere, and $\pi$ is the generalized Hopf bundle.
\end{ex}
\begin{ex}
 If $\sigma=(1/k;k)$, then $\D(\sigma)$ is the Grassmann manifold of $k$-planes in $\HH$,
$\mathcal{S}(\sigma)$ is the Stiefel manifold of $k$-frames in $\HH$, and $\pi$ is the Stiefel bundle. 
\end{ex}

The vertical and horizontal bundles over $\mathcal{S}(\sigma)$ are the subbundles
$\V\mathcal{S}(\sigma)=\Ker d \pi$ and $\H\mathcal{S}(\sigma)=\V\mathcal{S}(\sigma)^\bot$
of the tangent bundle of $\mathcal{S}(\sigma)$. Here  $^\bot$ denotes orthogonal complement with respect to the Hilbert-Schmidt product.
Vectors in $\V\mathcal{S}(\sigma)$ and $\H\mathcal{S}(\sigma)$
are called vertical and horizontal, respectively,
and a curve in $\mathcal{S}(\sigma)$ is called horizontal if its velocity vectors are horizontal.
Recall that for every curve $\rho$ in $\D(\sigma)$ and every $\Psi_0$ in the fiber over the initial operator $\rho(t_0)$, there is a unique horizontal lift of $\rho$ to $\mathcal{S}(\sigma)$ that extends from $\Psi_0$ \cite[page 69, Prop 3.1]{Kobayashi_etal1996}.
For convenience, we tacitly assume that all curves in this paper are defined on a common unspecified interval $t_0\leq t\leq t_1$. Moreover, we assume that they are piecewise smooth.

The infinitesimal generators of the gauge group action \eqref{action} yield canonical isomorphisms between the Lie algebra $\u(\sigma)$  of $\U(\sigma)$ and the fibers in $\V\mathcal{S}(\sigma)$. The Lie algebra consists of all anti-Hermitian $k\times k$ matrices that commutes with $P(\sigma)$, and the isomorphisms are
\begin{equation}\label{eq:inf gen}
\u(\sigma)\ni\xi\mapsto \Psi\xi\in\V_\Psi\mathcal{S}(\sigma).
\end{equation}
Furthermore, $\H\mathcal{S}(\sigma)$ is the kernel bundle of the gauge invariant mechanical connection form
$\A_{\Psi}=\I_{\Psi}^{-1}J_{\Psi}$,
where $\I_{\Psi}:\u(\sigma)\to \u(\sigma)^*$ and $J_{\Psi}:\T_{\Psi}{\mathcal{S}(\sigma)}\to \u(\sigma)^*$ are the locked inertia tensor and moment map, respectively,
\begin{equation}\label{eq:beta}
\I_{\Psi}\xi\cdot \eta=\half\tr\left(\left(\xi^\dagger \eta+\eta^\dagger \xi\right)P(\sigma)\right),\qquad
J_{\Psi}(X)\cdot\xi=\half\tr\big(X^\dagger\Psi\xi+\xi^\dagger\Psi^\dagger X\big).
\end{equation}
Using \eqref{eq:beta} we can derive an explicit formula for the connection form.
If $m_1, m_2, \dots , m_l$ are the multiplicities of the different eigenvalues in $\sigma$, with $m_1$ being the multiplicity of the greatest eigenvalue, $m_2$ the multiplicity of the second greatest eigenvalue, etc., and if for $j=1,2,\dots,l$,
\begin{equation*}
E_j=\diag(\0_{m_1},\dots,\0_{m_{j-1}},\1_{m_j},\0_{m_{j+1}},\dots,\0_{m_l}),
\end{equation*}
then
\begin{equation*}
\begin{split}
\I_\Psi\Big(\sum_jE_j\Psi^\dagger XE_jP(\sigma)^{-1}\Big)\cdot\xi
&=\half\tr\Big(\Big(\sum_jP(\sigma)^{-1}E_jX^\dagger\Psi E_j\xi+\xi^\dagger\sum_j E_j\Psi^\dagger XE_jP(\sigma)^{-1}\Big)P(\sigma)\Big)\\
&=\half\tr\Big(\sum_jE_j(X^\dagger\Psi \xi+\xi^\dagger\Psi^\dagger X)E_j\Big)\\
&=\half\tr\big(X^\dagger\Psi\xi+\xi^\dagger\Psi^\dagger X\big)\\
&=J_\Psi(X)\cdot\xi
\end{split}
\end{equation*}
for every $X$ in $\T_\Psi\mathcal{S}(\sigma)$ and every $\xi$ in $\u(\sigma)$.
Hence
\begin{equation*}\label{eq:explicit}
\A_\Psi(X)=\sum_jE_j\Psi^\dagger XE_jP(\sigma)^{-1}.
\end{equation*}
Observe that the orthogonal projection of $\T_\Psi\mathcal{S}(\sigma)$ onto $\V_\Psi\mathcal{S}(\sigma)$
is given by the connection form followed by the identification \eqref{eq:inf gen}. Therefore, the vertical and horizontal projections of $X$ in $\T_\Psi\mathcal{S}(\sigma)$ are $X^\bot=\Psi\A_\Psi(X)$ and $X^{||}=X-\Psi\A_\Psi(X)$, respectively.
We finish this section with a discussion on the distance between the fibers of $\pi$ over two given density operators $\rho_0$ and $\rho_1$ in $\D(\sigma)$. 

Consider the space $\Omega(\rho_0,\rho_1)$ of piecewise smooth curves that start in the fiber $\pi^{-1}(\rho_0)$ and end in the fiber $\pi^{-1}(\rho_1)$. This space can be given a natural smooth structure such that the tangent space at a curve $\Psi$ consists of all smooth vector fields $\chi$ along $\Psi$ that are vertical at the end points of $\Psi$. Let $E$ be the energy functional on $\Omega(\rho_0,\rho_1)$,
\begin{equation*} 
E[\Psi]=\half\int_{t_0}^{t_1} \tr(\dot \Psi^\dagger\dot \Psi)\dt.
\end{equation*}
The differential of $E$ at $\Psi$ is given by
\begin{equation*}
dE[\Psi]\chi=\frac{1}{2}\left[\tr(\chi^\dagger\dot \Psi+\dot\Psi^\dagger\chi)\right]_{t_0}^{t_1}-\frac{1}{2}\int_{t_0}^{t_1} \tr(\chi^\dagger \nabla_t\dot \Psi+\nabla_t\dot{\Psi}^\dagger\chi)\dt,
\end{equation*}
where $\nabla_t\dot{\Psi}$ denotes the covariant derivative of $\dot{\Psi}$ along $\Psi$. We call $\Psi$ an \emph{extremal for $E$} if $dE[\Psi]=0$. Clearly, extremals for $E$ are geodesics: $\nabla_t\dot \Psi=0$.

The length of a curve $\Psi$ in $\mathcal{S}(\sigma)$ is
\begin{equation*}
L[\Psi]=\int_{t_0}^{t_1}\sqrt{\tr(\dot{\Psi}^\dagger\dot{\Psi})}\dt.
\end{equation*}
Moreover, the distance between $\pi^{-1}(\rho_0)$ and $\pi^{-1}(\rho_1)$ is defined as the infimum of the lengths of all 
curves in $\Omega(\rho_0,\rho_1)$.
There is at least one curve in $\Omega(\rho_0,\rho_1)$ whose length equals the distance between the two fibers. This since the fibers are compact. Also, every such curve is an extremal for $E$. Therefore, they are horizontal:
\begin{prop}\label{Noethers theorem for geodesics}
If $\Psi$ is a geodesic in $\mathcal{S}(\sigma)$, then $J_\Psi(\dot \Psi)$ is constant.
\end{prop}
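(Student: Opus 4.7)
The plan is to recognize this as an instance of Noether's theorem. The gauge group $\U(\sigma)$ acts on $\mathcal{S}(\sigma)$ by right multiplication, and since the acting elements are unitary this preserves the Hilbert--Schmidt inner product $\tr(X^\dagger Y)$; so the action is by isometries, and the value of the moment map on the velocity of any geodesic is a Noether charge, hence conserved. However, all of this can be made concrete using only the formula \eqref{eq:beta} and the geodesic equation $\nabla_t\dot\Psi=0$, so I would present a direct calculation rather than invoke the general theorem.

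First I would simplify the moment. The two summands inside the trace in \eqref{eq:beta} are complex conjugates of each other, so
\[
J_\Psi(\dot\Psi)\cdot\xi = \Re\tr(\dot\Psi^\dagger\Psi\xi) = \la\Psi\xi,\dot\Psi\ra,
\]
where $\la\cdot,\cdot\ra$ denotes the real Hilbert--Schmidt inner product $\Re\tr((\cdot)^\dagger(\cdot))$ on $\mathcal{L}(\CC^k,\HH)$. For fixed $\xi\in\u(\sigma)$, differentiating in $t$ gives
\[
\frac{d}{dt}\bigl(J_\Psi(\dot\Psi)\cdot\xi\bigr) = \la\dot\Psi\xi,\dot\Psi\ra + \la\Psi\xi,\ddot\Psi\ra,
\]
and my goal is to show that both summands vanish.

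For the first summand, using $\xi^\dagger=-\xi$ one rewrites $\la\dot\Psi\xi,\dot\Psi\ra = -\Re\tr(\xi\,\dot\Psi^\dagger\dot\Psi)$; since $\dot\Psi^\dagger\dot\Psi$ is Hermitian and $\xi$ is anti-Hermitian, the trace inside is purely imaginary and the real part vanishes. For the second summand, $\Psi\xi$ lies in $\V_\Psi\mathcal{S}(\sigma)\subset \T_\Psi\mathcal{S}(\sigma)$ by the identification \eqref{eq:inf gen}, and the geodesic equation forces $\ddot\Psi$ to be orthogonal to $\T_\Psi\mathcal{S}(\sigma)$, so $\la\Psi\xi,\ddot\Psi\ra=0$.

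The step requiring the most care, and the one I flag as the main obstacle, is this last equivalence between the intrinsic condition $\nabla_t\dot\Psi=0$ and the extrinsic condition $\ddot\Psi\perp \T_\Psi\mathcal{S}(\sigma)$. It rests on the standard submanifold fact that, since $\mathcal{S}(\sigma)$ inherits its metric from the flat ambient space $\mathcal{L}(\CC^k,\HH)$, its Levi--Civita connection is the tangential projection of the trivial ambient connection, so $\nabla_t\dot\Psi$ is precisely the tangential component of the ordinary second derivative $\ddot\Psi$. Granted this, the two vanishings above are immediate; since $\xi\in\u(\sigma)$ was arbitrary, the element $J_\Psi(\dot\Psi)\in\u(\sigma)^*$ is constant along the geodesic.
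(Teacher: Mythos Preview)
Your proof is correct and takes a genuinely different route from the paper's. The paper argues variationally: for each $\eta\in\u(\sigma)$ it considers the one-parameter family $\Psi_\eps(t)=\Psi(t)\exp(\eps\eta)$, observes that the energy $E[\Psi_\eps]$ is independent of $\eps$ because $\U(\sigma)$ acts by isometries, and then applies the first-variation formula for $E$ already recorded in the paper; since $\Psi$ is a geodesic the integral term drops out and only the boundary contribution $[J_\Psi(\dot\Psi)\cdot\eta]_{\tau_0}^{\tau_1}$ survives, which must vanish for every subinterval. You instead differentiate $J_\Psi(\dot\Psi)\cdot\xi$ directly along the curve and kill the two resulting terms separately: the first via the algebraic observation that $\tr(\xi\,\dot\Psi^\dagger\dot\Psi)$ is purely imaginary (this is exactly the infinitesimal form of ``$\U(\sigma)$ acts by isometries''), the second via the extrinsic description of $\nabla_t$ as the tangential projection of the ambient derivative. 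Both arguments rest on the same two ingredients---isometry of the action and the geodesic equation---but yours is more self-contained, needing no variation or first-variation formula, while the paper's version makes the Noether structure explicit and reuses machinery already set up in the text.
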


\begin{proof} 
Choose any $\eta$ in $\u(\sigma)$ and consider the variation $\Psi_\eps(t)=\Psi(t)\exp(\eps\eta)$. We have that
$\tr(\dot \Psi_\eps^\dagger\dot \Psi_\eps)=\tr(\dot \Psi^\dagger\dot \Psi)$
since $\U(\sigma)$ acts through isometries. Hence
\begin{equation*}
\begin{split}
0&=\frac{1}{2}\dd{\eps}\left[\int_{\tau_0}^{\tau_1}\tr(\dot\Psi_\eps^\dagger\dot\Psi_\eps)\dt\right]_{\eps=0}\\
&=\frac{1}{2}\big[\tr(\eta^\dagger\Psi^\dagger \dot\Psi+\dot{\Psi}^\dagger\Psi\eta)\big]_{\tau_0}^{\tau_1}
-\frac{1}{2}\int_{\tau_0}^{\tau_1} \tr(\eta^\dagger\Psi^\dagger \nabla_t\dot \Psi + \nabla_t\dot \Psi^\dagger \Psi\eta)\dt\\
&=\big[J_\Psi(\dot \Psi)\cdot\eta\big]_{\tau_0}^{\tau_1}
\end{split}
\end{equation*}
for any $t_0\leq \tau_0\leq \tau_1\leq t_1$. We conclude that $J_\Psi (\dot \Psi)$ is constant.
\end{proof}

\section{Properties of the dynamic distance measure}\label{PDDM}

I this section we will prove that 
$D(\rho_0,\rho_1)$ equals the distance between the fibers $\pi^{-1}(\rho_0)$ and $\pi^{-1}(\rho_1)$. Theorem \ref{main result} follows easily from this observation.

\begin{prop}\label{crucial}
Suppose $\rho$ solves \eqref{von}. Let $\Phi$ be a horizontal lift of $\rho$. Then $D_H(\rho_0,\rho_1)\geq L[\Phi]$. Moreover, $D_H(\rho_0,\rho_1)=L[\Phi]$ if $i\dot\Phi=H\Phi$.
\end{prop}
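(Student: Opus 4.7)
The plan is to write $\dot\Phi$ in the form $-iH\Phi + \Phi\xi$ using the von Neumann equation, use horizontality to pin down $\xi$ explicitly, and then compare $\tr(\dot\Phi^\dagger\dot\Phi)$ with the integrand $\tr(H^2\rho) - \tr(H\rho)^2$ of $D_H$ via a Cauchy--Schwarz chain.

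First, consider the curve $t \mapsto e^{-i(t-t_0)H}\Phi(t_0)$; since $\rho$ solves \eqref{von}, this curve projects to $\rho$ under $\pi$, satisfies the Schr\"odinger equation driven by $H$, and agrees with $\Phi$ at $t_0$. Hence $\Phi$ differs from it by a smooth gauge transformation $U:[t_0,t_1]\to\U(\sigma)$, and differentiating produces $\dot\Phi = -iH\Phi + \Phi\xi$ with $\xi := U^{-1}\dot U \in \u(\sigma)$. Substituting into the explicit formula for $\A_\Phi$ from Section~\ref{SPIMS} and using that $\xi$ commutes with each $E_j$ and with $P(\sigma)$, the horizontality condition $\A_\Phi(\dot\Phi) = 0$ solves to $\xi = i\tilde M P(\sigma)^{-1}$, where $M := \Phi^\dagger H\Phi$ and $\tilde M := \sum_j E_j M E_j$ is the block-diagonal projection of $M$. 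Expanding $\tr(\dot\Phi^\dagger\dot\Phi)$ via $\Phi^\dagger\Phi = P(\sigma)$, anti-Hermiticity of $\xi$, and cyclicity of trace, the cross terms consolidate into
\[
\tr(\dot\Phi^\dagger\dot\Phi) = \tr(H^2\rho) - \sum_j p_j^{-1}\tr\bigl((E_j M E_j)^2\bigr).
\]

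It then suffices to show $\sum_j p_j^{-1}\tr((E_j M E_j)^2) \geq \tr(M)^2 = \tr(H\rho)^2$. Setting $a_j := \tr(E_j M E_j)$, an eigenvalue Cauchy--Schwarz on each Hermitian block of size $m_j$ gives $\tr((E_j M E_j)^2) \geq a_j^2/m_j$; a second Cauchy--Schwarz over $j$ combined with the normalization $\sum_j p_j m_j = \tr P(\sigma) = 1$ yields $\sum_j a_j^2/(p_j m_j) \geq \bigl(\sum_j a_j\bigr)^2 = \tr(M)^2$. Chaining the two estimates gives the pointwise bound $\tr(\dot\Phi^\dagger\dot\Phi) \leq \tr(H^2\rho) - \tr(H\rho)^2$, and integrating after taking square roots produces $L[\Phi] \leq D_H(\rho_0,\rho_1)$.

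For the equality case, if $i\dot\Phi = H\Phi$ then comparison with $\dot\Phi = -iH\Phi + \Phi\xi$ forces $\Phi\xi = 0$; since $\Phi^\dagger\Phi = P(\sigma)$ is invertible, $\Phi$ has full column rank and $\xi = 0$. Hence $\tilde M = 0$, every block $E_j M E_j$ vanishes, the correction term and $\tr(M)^2$ both collapse to zero, and $L[\Phi] = D_H(\rho_0,\rho_1)$. I expect the main effort to lie in the bookkeeping of the trace expansion: the cross terms mixing $H$ and $\xi$ must be reorganized using the block structure of $\tilde M$ and its commutation with $P(\sigma)$ before they fuse into the single negative correction $-\sum_j p_j^{-1}\tr((E_j M E_j)^2)$; after that the Cauchy--Schwarz chain is standard.
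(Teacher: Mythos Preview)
Your proof is correct and follows essentially the same route as the paper's: both relate the horizontal lift $\Phi$ and the Schr\"odinger lift $\Psi = \Phi U^{-1}$ by a gauge transformation, reduce the comparison of $\tr(H^2\rho)-\tr(H\rho)^2$ with $\tr(\dot\Phi^\dagger\dot\Phi)$ to a convexity inequality for an element of $\u(\sigma)$, and conclude via Cauchy--Schwarz/Jensen. The only cosmetic difference is that the paper expresses both quantities in terms of $\Psi$ and $\A_\Psi(\dot\Psi)$ and then invokes a separately stated lemma ($\tr(\xi^2 P(\sigma))\leq\tr(\xi P(\sigma))^2$ for $\xi\in\u(\sigma)$), whereas you stay with $\Phi$ throughout and split that same inequality into a block-wise and then across-block Cauchy--Schwarz step.
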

\begin{lma}\label{Jensen}
We have that $\tr(\xi^2 P(\sigma))\leq\tr(\xi P(\sigma))^2$
for every $\xi$ in $\u(\sigma)$, and $\tr(\xi^2 P(\sigma))=\tr(\xi P(\sigma))^2$ if and only if $\xi$ is a constant multiple of the identity.
\end{lma}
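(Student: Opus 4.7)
The plan is to reduce this inequality, via simultaneous diagonalization, to Jensen's inequality for the convex function $x\mapsto x^2$ (equivalently, a one-line application of Cauchy--Schwarz). Since $\xi\in\u(\sigma)$ is anti-Hermitian, I would write $\xi=iA$ with $A$ Hermitian. Because $\xi$ commutes with $P(\sigma)$, so does $A$. The inequality $\tr(\xi^2 P(\sigma))\leq\tr(\xi P(\sigma))^2$ then becomes, after multiplication by $-1$, the real statement
\[
\tr(A^2 P(\sigma))\;\geq\;\tr(AP(\sigma))^2,
\]
with equality iff $A$ is a real multiple of the identity.

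Since $A$ and $P(\sigma)$ are commuting Hermitian operators on $\CC^k$, I would diagonalize them simultaneously: choose an orthonormal basis in which $P(\sigma)=\diag(p_1,\dots,p_k)$ with $p_i>0$ and $\sum_i p_i=1$ (the latter because $\sigma$ is the spectrum of a density operator), and $A=\diag(a_1,\dots,a_k)$ with $a_i\in\mathbb{R}$. The inequality then reads
\[
\sum_i a_i^2 p_i\;\geq\;\Bigl(\sum_i a_i p_i\Bigr)^{\!2},
\]
and follows at once from Cauchy--Schwarz applied to the vectors $(a_i\sqrt{p_i})_i$ and $(\sqrt{p_i})_i$, together with $\sum_i p_i=1$.

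For the equality case, equality in Cauchy--Schwarz forces $(a_i\sqrt{p_i})_i$ to be proportional to $(\sqrt{p_i})_i$; since every $p_i>0$ this means $a_i=c$ for a common constant $c\in\mathbb{R}$, so $A=cI$ and hence $\xi=icI$ is a constant multiple of the identity, as claimed. There is essentially no obstacle in this argument; the only point worth checking is that $A$ and $P(\sigma)$ can be simultaneously diagonalized even when $P(\sigma)$ has repeated eigenvalues, which is standard since the restriction of $A$ to each eigenspace of $P(\sigma)$ is Hermitian and therefore admits an orthonormal eigenbasis inside that eigenspace.
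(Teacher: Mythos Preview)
Your proof is correct and follows essentially the same approach as the paper: write $\xi$ in terms of a Hermitian matrix, simultaneously diagonalize it with $P(\sigma)$ (the paper does this via a unitary $U\in\U(\sigma)$, which amounts to the same thing), and reduce to the scalar inequality $\sum_i a_i^2 p_i\geq(\sum_i a_i p_i)^2$. The only cosmetic difference is that the paper invokes strict convexity of $x\mapsto x^2$ (Jensen) where you use Cauchy--Schwarz, and these are equivalent here.
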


\begin{proof}
Write $i\xi=U\delta U^\dagger$, where $\delta$ is a real diagonal matrix and $U$ belongs to $\U(\sigma)$. We have that $\tr(\xi^2P(\sigma))=-\tr(\delta^2P(\sigma))$ and $\tr(\xi P(\sigma))^2=-\tr(\delta P(\sigma))^2$ since $P(\sigma)$ commutes with $U$. Moreover, $\tr(\delta P(\sigma))^2\leq \tr(\delta^2P(\sigma))$, and $\tr(\delta P(\sigma))^2= \tr(\delta^2P(\sigma))$ if and only if $\delta$ is a constant multiple of the identity. This since $x\mapsto x^2$ is strongly convex.
\end{proof}

\begin{proof}[Proof of Proposition \ref{crucial}]
Let $\Phi$ be a horizontal lift of $\rho$, and let $\Psi$ be any lift of $\rho$ such that $i\dot \Psi=H\Psi$ and $\Psi(t_0)=\Phi(t_0)$.
Then
\begin{equation*}
\Phi=\Psi U,\qquad U(t)=\exp_{+}\left(-\int_{t_0}^{t}\A_\Psi(\dot \Psi)\dt\right),
\end{equation*}
where $\exp_{+}$ is the positive time-ordered exponential. Now,
\begin{equation}\label{for any}
\begin{split}
\tr(H^2\rho)-\tr(H\rho)^2
&=\tr(\Psi^\dagger H^2 \Psi)-\tr(\Psi^\dagger H\Psi)^2\\
&=\tr(\dot \Psi^\dagger\dot \Psi)+\tr(\Psi^\dagger\dot \Psi)^2\\
&=\tr(\dot \Psi^\dagger\dot \Psi)+\tr(E_j\Psi^\dagger\dot \Psi E_j)^2\\
&=\tr(\dot \Psi^\dagger\dot \Psi)+\tr(\A_\Psi(\dot \Psi)P(\sigma))^2,
\end{split}
\end{equation}
and
\begin{equation*} 
\begin{split}
\tr(\dot \Phi^\dagger\dot \Phi)
&=\tr\left(U^\dagger\left(\dot \Psi^\dagger+\A_\Psi(\dot \Psi)\Psi^\dagger)(\dot \Psi-\Psi\A_\Psi(\dot \Psi)\right)U\right)\\
&=\tr\left(\dot \Psi^\dagger\dot \Psi+(\Psi^\dagger\dot \Psi-\dot \Psi^\dagger \Psi)\A_\Psi(\dot \Psi)-\A_\Psi(\dot \Psi)^2 P(\sigma)\right)\\
&=\tr(\dot \Psi^\dagger\dot \Psi)+2\tr(\Psi^\dagger\dot \Psi\A_\Psi(\dot \Psi))-\tr(\A_\Psi(\dot \Psi)^2 P(\sigma))\\
&=\tr(\dot \Psi^\dagger\dot \Psi)+2\tr(\Psi^\dagger\dot \Psi E_j\Psi^\dagger\dot \Psi E_jP(\sigma)^{-1})-\tr(\A_\Psi(\dot \Psi)^2 P(\sigma))\\
&=\tr(\dot \Psi^\dagger\dot \Psi)+2\tr\left(\left(E_j\Psi^\dagger\dot \Psi E_j\right)^2P(\sigma)^{-1}\right)-\tr(\A_\Psi(\dot \Psi)^2 P(\sigma))\\
&=\tr(\dot \Psi^\dagger\dot \Psi)+\tr(\A_\Psi(\dot \Psi)^2 P(\sigma)).
\end{split}
\end{equation*}
Hence
\begin{equation*} 
\begin{split}
\tr(H^2\rho)-\tr(H\rho)^2
&=\tr(\dot \Psi^\dagger\dot \Psi)+\tr(\A_\Psi(\dot \Psi)P(\sigma))^2\\
&\geq\tr(\dot \Psi^\dagger\dot \Psi)+\tr(\A_\Psi(\dot \Psi)^2 P(\sigma))\\
&=\tr(\dot \Phi^\dagger\dot \Phi)
\end{split}
\end{equation*}
by Lemma \ref{Jensen}. We conclude that
$D_H(\rho_0,\rho_1)\geq L[\Phi]$.
Moreover, if $\Psi$ is horizontal, and thus $\Psi=\Phi$, then $\tr(H^2\rho)-\tr(H\rho)^2=\tr(\dot\Phi^\dagger\dot\Phi)$ according to \eqref{for any}. In this case, $D_H(\rho_0,\rho_1)=L[\Phi]$.
\end{proof}

\begin{cor}\label{distance}
The dynamic distance between two density operators $\rho_0$ and $\rho_1$ in $\D(\sigma)$ equals the distance between the fibers $\pi^{-1}(\rho_0)$ and $\pi^{-1}(\rho_1)$.
\end{cor}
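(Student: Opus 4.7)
The plan is to prove the two inequalities $D(\rho_0,\rho_1)\ge\dist(\pi^{-1}(\rho_0),\pi^{-1}(\rho_1))$ and $D(\rho_0,\rho_1)\le\dist(\pi^{-1}(\rho_0),\pi^{-1}(\rho_1))$ separately, leveraging Proposition \ref{crucial} together with the variational setup introduced before Proposition \ref{Noethers theorem for geodesics}.

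For the lower bound I would take any Hamiltonian $H$ for which \eqref{von} has a solution $\rho$, pick any $\Psi_0\in\pi^{-1}(\rho_0)$, and let $\Phi$ be the horizontal lift of $\rho$ starting at $\Psi_0$. Since $\pi\circ\Phi=\rho$, the endpoint $\Phi(t_1)$ lies in $\pi^{-1}(\rho_1)$, so $\Phi\in\Omega(\rho_0,\rho_1)$ and $L[\Phi]\ge\dist(\pi^{-1}(\rho_0),\pi^{-1}(\rho_1))$. Combining this with $D_H(\rho_0,\rho_1)\ge L[\Phi]$ from Proposition \ref{crucial} and taking the infimum over admissible $H$ yields the lower bound.

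For the upper bound I would first use the compactness of the fibers, noted before Proposition \ref{Noethers theorem for geodesics}, to obtain a length-minimising curve $\Phi\in\Omega(\rho_0,\rho_1)$, which is necessarily an extremal of $E$. Interior variations $\chi$ reduce $dE[\Phi]=0$ to the geodesic equation $\nabla_t\dot\Phi=0$, while variations that are merely vertical at the endpoints, i.e.\ $\chi(t_i)=\Phi(t_i)\eta_i$ for $\eta_i\in\u(\sigma)$, leave a boundary term equal to $[J_\Phi(\dot\Phi)\cdot\eta]_{t_0}^{t_1}$. Requiring this to vanish for every choice of $\eta_0,\eta_1$ forces $J_\Phi(\dot\Phi)=0$ at both endpoints, and Proposition \ref{Noethers theorem for geodesics} propagates the vanishing to all $t$; since $\A_\Phi=\I_\Phi^{-1}J_\Phi$, the minimiser $\Phi$ is horizontal. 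I would then define a time-dependent Hermitian operator $H$ on $\HH$ by prescribing $H(t)\Phi(t)=i\dot\Phi(t)$ on the image of $\Phi(t)$ and extending Hermitianly to the orthogonal complement. This prescription is consistent and Hermitian on the image because differentiating $\Phi^\dagger\Phi=P(\sigma)$ yields the skew identity $\Phi^\dagger\dot\Phi+\dot\Phi^\dagger\Phi=0$. A short calculation shows that $\rho=\Phi\Phi^\dagger$ solves \eqref{von} for this $H$, so the equality clause of Proposition \ref{crucial} gives $D_H(\rho_0,\rho_1)=L[\Phi]=\dist(\pi^{-1}(\rho_0),\pi^{-1}(\rho_1))$, and the upper bound follows.

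The main obstacle is the upper bound, and within it the step from $dE[\Phi]=0$ to horizontality: the variations in $\Omega(\rho_0,\rho_1)$ have free (vertical) endpoints rather than fixed ones, so one cannot invoke a classical fixed-endpoint geodesic argument, and Noether's theorem in the form of Proposition \ref{Noethers theorem for geodesics} is what lets the boundary information on $J_\Phi(\dot\Phi)$ propagate along $\Phi$. The subsequent construction of a driving Hamiltonian is computationally routine, but the whole scheme rests on the horizontality of the minimiser, which is precisely what allows Proposition \ref{crucial} to close the argument with equality rather than a strict inequality.
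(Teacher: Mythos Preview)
Your argument is correct and follows the same overall architecture as the paper's: the lower bound via Proposition~\ref{crucial}, and the upper bound by producing, for a suitable horizontal curve $\Phi$, a Hamiltonian with $i\dot\Phi=H\Phi$ so that the equality clause of Proposition~\ref{crucial} applies. Two points of comparison are worth making.

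First, the horizontality of the length-minimiser that you establish via the first-variation boundary terms and Proposition~\ref{Noethers theorem for geodesics} is not new work for the corollary: the paper already recorded this fact in Section~\ref{SPIMS} (the sentence ``Therefore, they are horizontal'' immediately preceding Proposition~\ref{Noethers theorem for geodesics}). So your proof re-derives an ingredient that the paper's one-line proof simply takes as given.

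Second, and more substantively, the paper obtains the driving Hamiltonian differently. Rather than prescribing $H(t)\Phi(t)=i\dot\Phi(t)$ on the range of $\Phi(t)$ and extending Hermitianly, the paper observes that the unitary group of $\HH$ acts transitively on $\mathcal{S}(\sigma)$, so \emph{every} curve in $\mathcal{S}(\sigma)$ is already a solution of some Schr\"odinger equation. This is slicker: it dispenses with the pointwise construction and the Hermiticity check (your use of $\Phi^\dagger\dot\Phi+\dot\Phi^\dagger\Phi=0$), and it also makes clear that one need not first isolate the minimiser---for \emph{any} horizontal $\Phi\in\Omega(\rho_0,\rho_1)$ one gets $D\le L[\Phi]$, and taking the infimum over such $\Phi$ recovers the fiber distance. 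Your explicit construction has the compensating virtue of being self-contained and of exhibiting a concrete $H$, which may be useful in applications.
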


\begin{proof}
Immediate from Proposition \ref{crucial} and the fact that every curve in $\mathcal{S}(\sigma)$ is the solution to some Schr\"odinger equation. This since the unitary group of $\HH$ acts transitively on $\mathcal{S}(\sigma)$. 
\end{proof}

Proposition \ref{crucial} and Corollary \ref{distance} let us conclude that
for any pair of density operators $\rho_0$ and $\rho_1$ in $\D(\sigma)$, there is a curve $\Psi$ in 
$\mathcal{S}(\sigma)$ that extends from the fiber over $\rho_0$ and ends in the fiber over $\rho_1$, and which is such that
$D(\rho_0,\rho_1)=L[\Psi]$. This observation make the proof of Theorem \ref{main result} fairly straightforward.

\begin{proof}[Proof of Theorem \ref{main result}]
The function $D$ is positive because $D_H(\rho_0,\rho_1)$ is always a non-negative number by \eqref{Hdistance}.
Moreover, $D$ is non-degenerate. Indeed, let $\Psi$ be a curve in $\Omega(\rho_0,\rho_1)$ such that 
$D(\rho_0,\rho_1)=L[\Psi]$. If $D(\rho_0,\rho_1)=0$, then $\Psi$ is stationary, and hence $\rho_0=\rho_1$.
The opposite implication is obvious.
 
To see that $D$ is symmetric let $\Psi$ be a curve like the one in the proof of non-degeneracy.
Define $\Phi$ by $\Phi(t)=\Psi(t_1+t_0-t)$. Then $\Phi$ is a horizontal curve that projects onto a curve in $\D(\sigma)$ from $\rho_1$ to $\rho_0$.
Consequently,
\begin{equation*}
D(\rho_1,\rho_0)\leq L[\Phi]=L[\Psi]=D(\rho_0,\rho_1).
\end{equation*}  
An identical argument shows that $D(\rho_0,\rho_1) \leq D(\rho_1,\rho_0)$.
Thus, $D$ is symmetric.

Finally, to see that $D$ satisfies the  triangle inequality  let $\Psi_{ij}$ be a horizontal curve in $\mathcal{S}(\sigma)$ covering a curve in $\D(\sigma)$ from $\rho_i$ to $\rho_j$, $i,j=0,1,2$. Also assume that $D(\rho_i,\rho_j)=L[\Psi_{ij}]$. Then 
\begin{equation*}
\Phi(t)=\begin{cases}
\Psi_{01}(2t-t_0),\quad\text{if}\quad t_0\leq t\leq (t_1+t_0)/2,\\
\Psi_{12}(2t-t_1),\quad\text{if}\quad (t_1+t_0)/2\leq t\leq t_1.\\
\end{cases}
\end{equation*}
is a horizontal curve connecting the fibers over $\rho_0$ and $\rho_2$.
Therefore, 
\begin{equation*}
D(\rho_0,\rho_2)\leq L[\Phi]= L[\Psi_{01}]+L[\Psi_{12}]=D(\rho_0,\rho_1)+D(\rho_1,\rho_2).
\end{equation*}
Hence $D$ satisfies the triangle inequality.
\end{proof}

\section{Relation between the dynamic distance measure  and Bures distance}\label{Uhlmann and Bures}

Suppose $\HH$ is $n$-dimensional. Let $\Sinv(\CC^n,\HH)$ be the space of all
invertible maps in $\mathcal{L}(\CC^n,\HH)$ with unit norm, and $\Dinv(\HH)$  be the space of all invertible density operators acting on $\HH$. Then $\Pi:\Sinv(\CC^n,\HH)\to\Dinv(\HH)$ defined by $\Pi(\Psi)=\Psi\Psi^\dagger$ is a $\U(n)$-bundle, which we call \emph{Uhlmann's bundle} since it first appeared in \cite{Uhlmann1991}.
The geometry of Uhlmann's bundle has been thoroughly investigated, and it is an important tool in quantum information theory, mainly due to its close relationship with the Bures distance function \cite{Bures1969,Uhlmann1992}.

Uhlmann's bundle is equipped with the mechanical connection, which means that the horizontal bundle is the orthogonal complement of the vertical bundle with respect to the Hilbert-Schmidt inner product. Moreover, the Bures distance between two density operators in $\Dinv(\HH)$ equals the distance between the corresponding fibers of $\Pi$, see \cite{Uhlmann1992}. We denote the Bures distance function by $D_B$.

Suppose $\sigma$ has length $n$. Then $\mathcal{S}(\sigma)$ is a submanifold of $\Sinv(\CC^n,\HH)$. Moreover, the vertical bundle of $\mathcal{S}(\sigma)$ is subbundle of the restriction of the vertical bundle of $\Sinv(\CC^n,\HH)$ to $\mathcal{S}(\sigma)$.
However, no nonzero horizontal vector in Uhlmann's bundle is tangential to $\mathcal{S}(\sigma)$.
To see this, let $\Psi$ be any element in $\mathcal{S}(\sigma)$. Then $X$ in $\T_\Psi\Sinv(\CC^n,\HH)$ is horizontal, i.e. is annihilated by the mechanical connection of the Uhlmann bundle, if and only if 
\begin{equation}\label{Uhlmann parallel}
\Psi^\dagger X-X^\dagger\Psi=0,
\end{equation}
see \cite{Uhlmann1991}. On the other hand, every $X$ in $\T_\Psi\mathcal{S}(\sigma)$ satisfies
\begin{equation}\label{relation}
\Psi^\dagger X+X^\dagger\Psi=0
\end{equation}
since $\Psi^\dagger\Psi=P(\sigma)$. Clearly, only the zero vector satisfies both \eqref{Uhlmann parallel} and \eqref{relation}.

The distance between $\rho_0$ and $\rho_1$ in $\D(\sigma)$ is never smaller than Bures distance between them. Indeed, every curve between $\pi^{-1}(\rho_0)$ and $\pi^{-1}(\rho_1)$ in $\mathcal{S}(\sigma)$ is a curve between $\Pi^{-1}(\rho_0)$ and $\Pi^{-1}(\rho_1)$ in $\Sinv(\CC^n,\HH)$, and since the metrics on the total spaces of the two bundles are induced from a common ambient metric we can conclude that
\begin{equation}\label{bures inequality}
D(\rho_0,\rho_1)\geq D_B(\rho_0,\rho_1).
\end{equation}

Uhlmann \cite{Uhlmann1992} and Dittmann \cite{Dittmann1993, Dittmann1999} have derived explicit formulas for the Bures distance for density operators on finite dimensional Hilbert spaces.
For density operators on $\CC^2$ the formula reads
\begin{equation}\label{Dittmann}
D_B(\rho,\rho+\delta\rho)^2=\frac 14\tr\big(\delta\rho\delta\rho+\frac {1}{\det\rho}(\delta\rho-\rho\delta\rho)^2\big).
\end{equation}
We use this formula to show that there are density operators $\rho_0$ and $\rho_1$ acting on $\CC^2$ for which
the inequality in \eqref{bures inequality}
is strict.

Suppose $\sigma=(p_1,p_2)$, let $\eps>0$, and define a curve $\Psi$ in $\mathcal{S}(\sigma)$ by 
\begin{equation*}
\Psi(t)=\begin{bmatrix} \sqrt{p_1}\cos(\eps t) & \sqrt{p_2}\sin(\eps t)\\ -\sqrt{p_1}\sin(\eps t) & \sqrt{p_2}\cos(\eps t)\end{bmatrix},\qquad 0\leq t\leq 1.
\end{equation*}
Set $\rho_0=\Psi(0)\Psi(0)^\dagger$ and $\rho_1=\Psi(1)\Psi(1)^\dagger$. Then, for $\eps$ small enough, the length of $\Psi$ equals $D(\rho_0, \rho_1)$. In this case,
$D(\rho_0,\rho_1)=L[\Psi]=\eps$.
However, \eqref{Dittmann} yields
\begin{equation*}
D_B(\rho_0,\rho_1)=\frac {p_1-p_2}{\sqrt{2}}|\sin\eps|\sqrt{2+\frac{(p_1-p_2)^2}{2p_1p_2}\sin^2\eps}.
\end{equation*}

\section{Conclusion}

In summary we have introduced a measurable quantity, called a dynamic distance measure, on each space of isospectral density operators, and shown that it is a proper distance measure, i.e., a positive, non-degenerate, symmetric binary function that satisfies the triangle inequality.  
The main result was formulated in Section \ref{IMSDDM}, but its proof was postponed until Section \ref{PDDM} 
to make the paper accessible also to those readers who are mainly interested in the result, rather than the extensive geometrical setup and fairly technical proof. We have also compared our dynamic distance measure with the Bures distance. The outcome of that comparison is that the dynamic distance measure and the Bures distance are different. In fact, the dynamic distance measure is bounded from below by the Bures distance.
Because the dynamic distance measure is defined in terms of Hamiltonians, we believe that our results have many interesting applications in fields such as quantum computing and condense matter, where Hamiltonians for specific quantum operations or specific quantum systems are usually defined explicitly.

\section*{Acknowledgments}

The second author acknowledges the financial support from the Swedish Research Council (VR).

\end{document}